\theoremstyle{plain}
\newtheorem{theorem}{Theorem}
\newtheorem{statement}[theorem]{Statement}
\theoremstyle{definition}
\newtheorem{definition}{Definition}
\newtheorem{remark}{Remark}
\newcommand{\Integer}{\mathbb{Z}}
\newcommand{\const}{\mathop{\rm const}}
\title{Integrable 7-point discrete equations and evolution lattice
equations of order 2}
\author{V.E. Adler\footnote{L.D. Landau Institute for Theoretical Physics,
Ac.~Semenov 1-A, 142432 Chernogolovka, Russian Federation. E-mail: adler@itp.ac.ru}}
\date{27 May 2017}
\begin{document}\thispagestyle{empty}
\maketitle

\begin{abstract}
We consider differential-difference equations that determine the continuous symmetries of discrete equations on the triangular lattice. It is shown that a certain combination of continuous flows can be represented as a scalar evolution lattice equation of order 2. The general scheme is illustrated by a number of examples, including an analog of the elliptic Yamilov lattice equation.
\medskip

\noindent{\em Keywords:} integrability; discrete equation;
differential-difference equation; lattice; symmetry.
\medskip

\noindent 2010 Mathematics Subject Classification: 37K10; 37K35.
\end{abstract}

\section{Introduction}\label{s:intro}

One of characteristic features of integrable discrete equations is the existence of continuous symmetries, that is, vector fields $\partial_x$ which preserve the equation $E[u]=0$ on the lattice: $\partial_x(E)=0|_{E=0}$. In particular, it is well known that differential-difference equations define continuous symmetries for quad-equations and their higher order generalizations \cite{Papageorgiou_Nijhoff_1996, Levi_Petrera_Scimiterna_Yamilov_2008, Adler_Postnikov_2014, Garifullin_Mikhailov_Yamilov_2014}. In this paper we study a relation of evolution lattice equation of second order
\begin{equation}\label{ut2}
 u_{,t}=f(u_{-2},u_{-1},u,u_1,u_2),\quad
 u_j=u(t,n+j),\quad u_{,t}=\partial u/\partial t,
\end{equation}
with another type of discrete equations, namely, 7-point equations on the triangular lattice
\begin{equation}\label{i.7-point}
 f(q,q_{1,0})-\tilde f(q,q_{-1,0})+g(q,q_{0,1})-\tilde g(q,q_{0,-1})
  +h(q,q_{-1,-1})-\tilde h(q,q_{1,1})=0.
\end{equation}
Continuous symmetries for equations (\ref{i.7-point}) are determined by vector fields of the form
\begin{equation}\label{i.Dx}
 q_{,x}=a(q,f(q,q_{1,0})-\tilde g(q,q_{0,-1})-\tilde h(q,q_{1,1}))
\end{equation}
and similar equations that arise under reflections of the lattice. Here we do not discuss the problem of classification of all consistent pairs (\ref{i.7-point}), (\ref{i.Dx}) and confine ourselves to examples that can be obtained by starting from already known integrable equations (\ref{i.7-point}). Such equations were studied, for example, in \cite{Suris_1995, Suris_1996, Suris_1997, Adler_2000a, Adler_2000b, Adler_Suris_2004, Boll_Suris_2009}, where the zero curvature representations and B\"acklund transformations were constructed and a continuous limit to differential-difference equations of the Ruijenaars--Toda lattice type was investigated. Continuous symmetries (\ref{i.Dx}) have not been considered before, with the exception of one example \cite{Adler_2000a}.

The main result of the paper is related to the observation that a symmetrized linear combination of flows of the type (\ref{i.Dx}) is described, under certain additional conditions, by a lattice equation of the form (\ref{ut2}). A direct search shows that such equations exist for all examples from the cited works. Almost all of them are new if we distinguish equations up to the point transformations.

For the case $h=\tilde h=0$ corresponding to 5-point equations on the square lattice, the continuous symmetries of type (\ref{i.Dx}) were stidied in \cite{Adler_Suris_2004, Adler_Shabat_2006}. Here,  instead of (\ref{ut2}), the lattice equations of order 1 appear, as it was shown in \cite{Adler_Suris_2004} by example related with the elliptic Yamilov lattice (a semi-discrete analog of the Krichever--Novikov equation)
\[
 u_{,t}=\frac{H(u,u_1)}{u_1-u_{-1}}-\frac{1}{2}\partial_1(H(u,u_1)),
\]
where $H$ is a symmetric biquadratic polynomial, $\partial_i:=\partial/\partial u_i$. Recall, that this equation is the main case in the classification of integrable lattice equations of first order \cite{Yamilov_1983, Yamilov_2006}. The classification problem for second order equations (\ref{ut2}) is much more difficult and in this direction only few particular results were obtained so far \cite{Adler_2014, Adler_2016, Garifullin_Yamilov_Levi_2016}. New equations found in this paper essentially expand the list of examples and give some idea of the complexity of the answers in this problem (although they obviously do not cover all integrable cases; in particular, all the lattice equations under scrutiny should admit zero curvature representations in $2\times2$ matrices, as well as the parent equations (\ref{i.7-point}); on the other hand, such examples as the Bogoyavlensky lattice and some its generalizations are related to $3\times3$ matrices).

Section \ref{s:construction} contains the general definition of consistency of 7-point equations (\ref{i.7-point}) and flows of the type (\ref{i.Dx}), and the derivation scheme of the associated lattice equations (\ref{ut2}). An important intermediate step here is related with 2-component evolution systems corresponding to the choice of dynamical variables on a pair of neighboring lines of the lattice. Such systems form an independent class of equations deserving a separate study.

The examples given in section \ref{s:difference} are related with the list of 7-point Lagrangian equations that are invariant under translation $q\to q+\varepsilon$ \cite{Adler_2000a, Adler_2000b, Boll_Suris_2009}. We will demonstrate that the lattice equations (\ref{ut2}) can be attached to all equations from the list; in the cases when the parent equation (\ref{i.7-point}) is not symmetric with respect to reflections of the lattice, one has to apply an additional point transformation. Two of obtained equations have been considered earlier \cite{Garifullin_Yamilov_2012, Garifullin_Mikhailov_Yamilov_2014, Adler_2016}, the rest ones seem to be new. It turns out, however, that all of them admit difference substitutions $w=\phi[u]$ which bring to the same equation \cite{Adler_2016}
\begin{equation}\label{i.wt}
 w_{,t}=(w+1)\Bigl(w_2\Bigl(\frac{1}{w_1}+1\Bigr)w
   -w\Bigl(\frac{1}{w_{-1}}+1\Bigr)w_{-2} +w_1-w_{-1}\Bigr),
\end{equation}
which plays a role of a kind of universal object in the theory. About the equation (\ref{i.wt}) little is known so far, but, undoubtedly, it deserves a detailed study.

In section \ref{s:Q4} we consider 7-point equations from \cite{Adler_Suris_2004}, which are related with the $Q_4$ quad-equation and its degeneracies (symmetrical examples from section \ref{s:difference} can be obtained from here as limiting cases, as well). This brings us to the most general family of integrable lattice equations in this paper,
\begin{equation}\label{i.HPut}
 u_{,t}=\frac{H(u_{-1},u)H(u,u_1)(u_2-u_{-2})}
          {P(u_{-2},u_{-1},u,u_1)P(u_{-1},u,u_1,u_2)},
\end{equation}
where $P(u_{-1},u,u_1,u_2)$ is an affine-linear polynomial with the symmetry group of a square
\[
 P(u_{-1},u,u_1,u_2)=P(u_2,u_1,u,u_{-1})=P(u_1,u,u_{-1},u_2),
\]
and $H(u,u_1)=H(u_1,u)$ is a symmetric biquadratic polynomial determined through $P$ by relation
\[
 \const H(u,u_1)=\partial_{-1}(P)\partial_2(P)-\partial_{-1}\partial_2(P)P.
\]
The generic equation (\ref{i.HPut}) corresponds to an elliptic curve, and it is likely that it cannot be brought by substitutions to other lattice equations, in the same way as the Yamilov lattice. In the cases of trigonometric and rational degenerations, equation (\ref{i.HPut}) admits a substitution into (\ref{i.wt}).

\section{General scheme}\label{s:construction}

In this section we explain how the lattice equations (\ref{ut2}) appear in the theory of discrete equations on triangular lattice. The presented construction include the following steps:
\begin{itemize}\setlength{\itemsep}{-2pt}
\item[---] derivation of continuous symmetries for discrete equations;
\item[---] an equivalent form of these flows as 2-component lattice equations;
\item[---] rewriting in the scalar form, possible in a symmetric situation.
\end{itemize}

We start from the discrete equations of the form
\begin{equation}\label{7-point}
 f(q,q_{1,0})-\tilde f(q,q_{-1,0})+g(q,q_{0,1})-\tilde g(q,q_{0,-1})
  +h(q,q_{-1,-1})-\tilde h(q,q_{1,1})=0,
\end{equation}
where function $q=q(n_1,n_2)$ depends on two integer variables and subscripts denote the offsets: $q_{i,j}=q(n_1+i,n_2+j)$. We will use also the following notations for the shift operators generating the triangular lattice:
\[
 T_1:q\to q_{1,0},\quad T_2:q\to q_{0,1},\quad T_3=T^{-1}_1T^{-1}_2.
\]
The basic continuous symmetries for equations (\ref{7-point}) are defined by differential-difference equations of the form
\begin{equation}\label{7-Dx}
 q_{,x}=a(q,F),\quad
 F=f(q,q_{1,0})-\tilde g(q,q_{0,-1})-\tilde h(q,q_{1,1}).
\end{equation}
Taking the initial equation into account, one can write the differentiation $\partial_x$ also in the equivalent form
\begin{equation}\label{7-tildeDx}
 q_{,x}=a(q,\tilde F),\quad
 \tilde F=\tilde f(q,q_{-1,0})-g(q,q_{0,1})-h(q,q_{-1,-1}).
\end{equation}

\begin{definition}\label{def:cons}
Discrete equation (\ref{7-point}) and continuous equation (\ref{7-Dx}) are consistent, if the equality $\partial_x(F-\tilde F)|_{F=\tilde F}=0$ holds identically on the lattice.
\end{definition}

From the computational point of view, the check of the compatibility condition amounts to the equality
\begin{multline}\label{DxFF}
 \left(\frac{\partial F}{\partial q}
   +\frac{\partial F}{\partial q_{1,0}}T_1
   +\frac{\partial F}{\partial q_{0,-1}}T^{-1}_2
   +\frac{\partial F}{\partial q_{1,1}}T^{-1}_3\right)(a(q,\tilde F))\\
 -\left(\frac{\partial\tilde F}{\partial q}
   +\frac{\partial\tilde F}{\partial q_{-1,0}}T^{-1}_1
   +\frac{\partial\tilde F}{\partial q_{0,1}}T_2
   +\frac{\partial\tilde F}{\partial q_{-1,-1}}T_3\right)(a(q,F))
 =\phi[q](F-\tilde F)
\end{multline}
with an indefinite factor $\phi$ which may depend, as it is easy to prove, only on the same set of variables which are involved in (\ref{7-point}).

It should be stressed that, although equations (\ref{7-Dx}), (\ref{7-tildeDx}) contain three independent variables $n_1,n_2$ and $x$, this {\em is not} a 3-dimensional system, because we consider both equations simultaneously and this pair of equations is defined only on the two-dimensional equation (\ref{7-point}). (Such a situation is rather typical for discrete and semi-discrete equations, but, unfortunately, it is often misused in the literature. Dimension of a system is one greater than the dimension of submanifold, on which the initial conditions for generic solution are given, and may not coincide with the number of independent variables.)

Equations (\ref{7-Dx}), (\ref{7-tildeDx}) can be rewritten in the form of a differential-difference equation with two components corresponding to two neighboring lines of the lattice parallel to the axis $T_2$ or to the axis $T_3$.

\begin{statement}\label{st:qpx}
Let $j$ be a constant integer. The variables $q(k)=q(j,k)$, $p(k)=q(j+1,k)$ satisfy the system of equations
\begin{align*}
 q_{,x}&=a(q,f(q,p)-\tilde g(q,q_{-1})-\tilde h(q,p_1)),\\
 p_{,x}&=a(p,\tilde f(p,q)-g(p,p_1)-h(p,q_{-1}));
\end{align*}
the variables $q(k)=q(-k,j-k)$, $p(k)=q(-k,j+1-k)$ satisfy the system
\begin{align*}
 q_{,x}&=a(q,f(q,p)-\tilde g(q,p_1)-\tilde h(q,q_{-1})),\\
 p_{,x}&=a(p,\tilde f(p,q)-g(p,q_{-1})-h(p,p_1)).
\end{align*}
\end{statement}

Further on, we will be interested in the situation, when equation (\ref{7-point}) admits, along with $\partial_x$, two another compatible flows of similar kind, corresponding to the $\Integer_3$-symmetry of the lattice (see fig.~\ref{fig:7-point} on the left).

\begin{figure}
\centerline{
\includegraphics[scale=0.6]{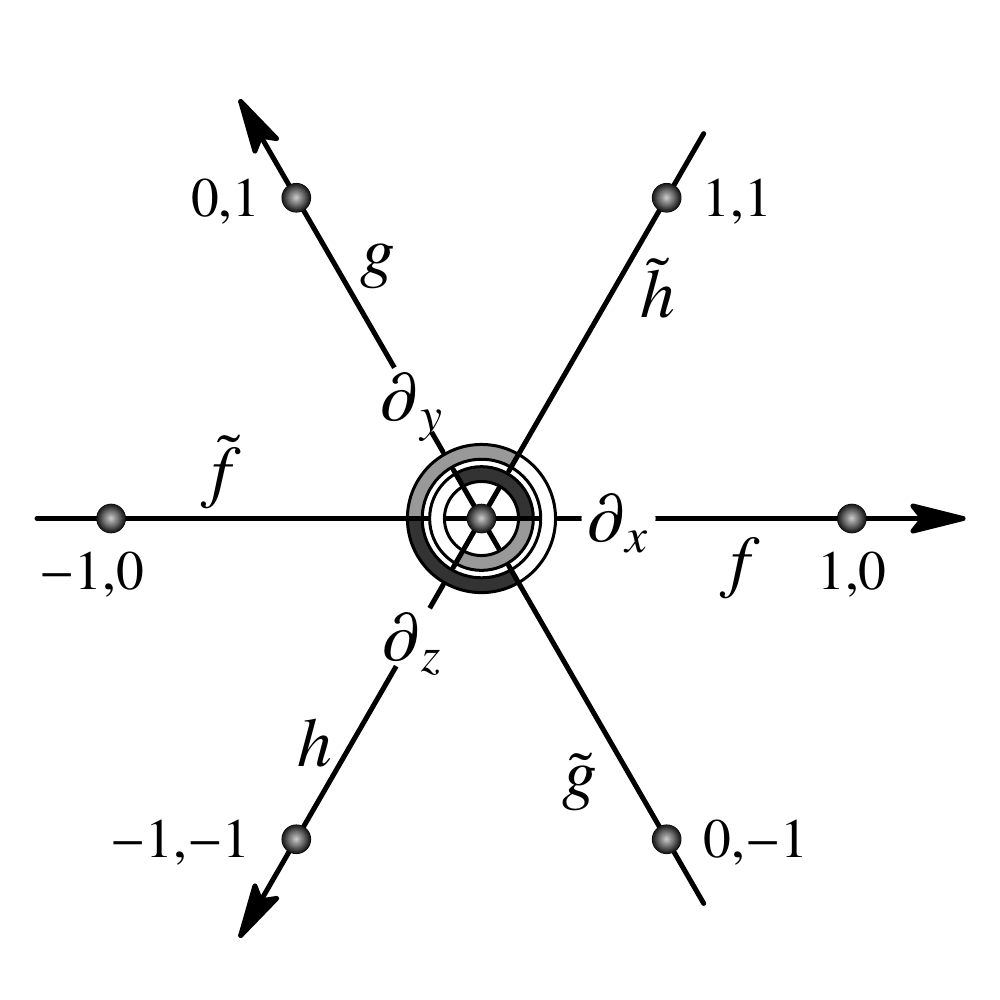}\qquad
\raisebox{15mm}{\includegraphics[scale=0.6]{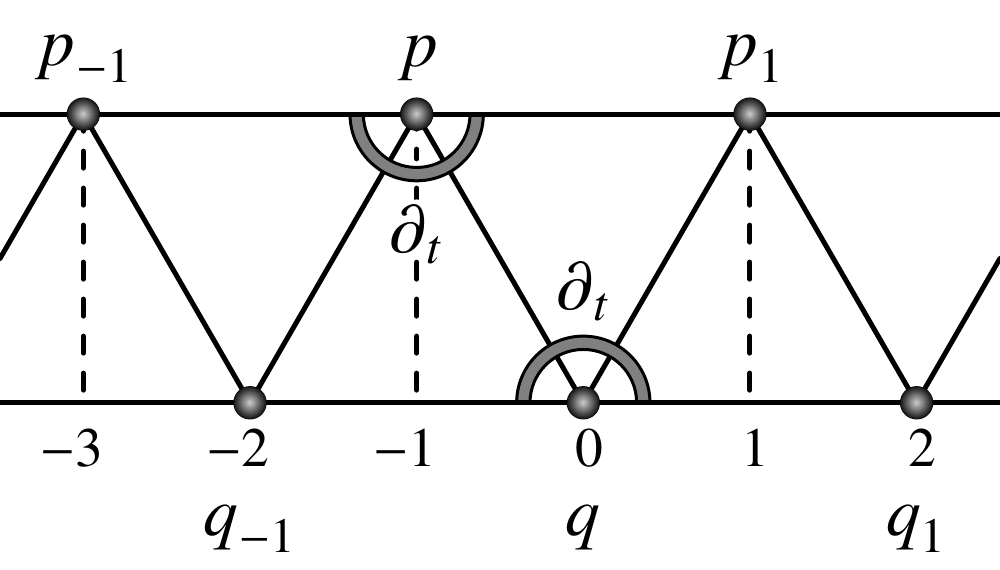}}}
\caption{Left: 7-point equation and differentiations $\partial_x$,
$\partial_y$, $\partial_z$.\hfil Right: embedding into triangle lattice of the second order lattice equation corresponding to the differentiation $\partial_t=\partial_y+\partial_z$.}
\label{fig:7-point}
\end{figure}

\begin{definition}\label{def:xyz}
We will say that the discrete equation (\ref{7-point}) admits a complete set of basic symmetries, if it is consistent, in the sense of Definition \ref{def:cons}, with each of three equations
\begin{alignat}{3}
\label{Dx}
 q_{,x}&= a(q, &&f-\tilde g-\tilde h)= a(q, &&\tilde f-g-h),\\
\label{Dy}
 q_{,y}&= b(q,-&&\tilde f+g-\tilde h)= b(q,-&&f+\tilde g-h),\\
\label{Dz}
 q_{,z}&= c(q,-&&\tilde f-\tilde g+h)= c(q,-&&f-g+\tilde h),
\end{alignat}
and the vector fields $\partial_x$, $\partial_y$, $\partial_z$ commute on equation (\ref{7-point}).
\end{definition}

Statement \ref{st:qpx} and $\Integer_3$-symmetry of the lattice imply that for any two of three equations (\ref{Dx}), (\ref{Dy}), (\ref{Dz}) there exist a common direction, one of three on the lattice, such that both these equation can be represented as 2-component system along this direction. For the sake of definiteness, let us consider the basic symmetries $\partial_y$ and $\partial_z$, then the variables $q(k)=q(k,j)$, $p(k)=q(k,j+1)$, $j=\const$, satisfy the following equations:
\begin{alignat}{2}
\nonumber
 q_{,y}&=b(q,-\tilde f(q,q_{-1})+g(q,p)-\tilde h(q,p_1)),\\
\label{qpy}
 p_{,y}&=b(p,-f(p,p_1)+\tilde g(p,q)-h(p,q_{-1})),\\[5pt]
\nonumber
 q_{,z}&=c(q,-f(q,q_1)-g(q,p)+\tilde h(q,p_1)),\\
\label{qpz}
 p_{,z}&=c(p,-\tilde f(p,p_{-1})-\tilde g(p,q)+h(p,q_{-1})).
\end{alignat}
This is the choice of variables which is most convenient when checking commutativity of the flows $\partial_y$ and $\partial_z$, because here we do not need to take  the discrete equation (\ref{7-point}) into account.

The next step of our construction is the change of variables
\begin{equation}\label{u-odd-even}
 u(2k)=q(k)=q(k,j),\quad u(2k-1)=p(k)=q(k,j+1),\quad j=\const,
\end{equation}
which can be described as projection of the vertices from two neighboring lines onto one common line (right fig.~\ref{fig:7-point}). As a result, systems (\ref{qpy}), (\ref{qpz}) turn into scalar lattice equations for the variable $u$, but with different equations for even and odd sites rather than autonomous ones:
\begin{alignat}{3}
\nonumber
 u_{,y}&=b(u,-\tilde f(u,u_{-2})+g(u,u_{-1})-\tilde h(u,u_1)),&& u=u(2k),\\
\label{uy}
 u_{,y}&=b(u,-f(u,u_2)+\tilde g(u,u_1)-h(u,u_{-1})),&& u=u(2k-1);\\[5pt]
\nonumber
 u_{,z}&=c(u,-f(u,u_2)-g(u,u_{-1})+\tilde h(u,u_1)),&& u=u(2k),\\
\label{uz}
 u_{,z}&=c(u,-\tilde f(u,u_{-2})-\tilde g(u,u_1)+h(u,u_{-1})),\quad && u=u(2k-1).
\end{alignat}
Finally, it remains to note that, under certain conditions, the equations for even and odd variables can be made identical by considering a linear combination of flows. This is reminiscent of the situation with the Ablowitz--Ladik lattice \cite{Ablowitz_Ladik_1975} which is the symmetrized sum of two commuting flows.

\begin{statement}\label{st:ut}
Let the equations under scrutiny are such that
\begin{equation}\label{bgg}
 b(u,v)=c(u,v),\quad g(u,v)=h(u,v),\quad \tilde g(u,v)=\tilde h(u,v),
\end{equation}
then the sum of the flows $\partial_t=\partial_y+\partial_z$ is described by autonomous second order evolution lattice equation:
\begin{multline}\label{ut}
 u_{,t}=  b(u,-\tilde f(u,u_{-2})+g(u,u_{-1})-\tilde g(u,u_1))\\
         +b(u,-f(u,u_2)-g(u,u_{-1})+\tilde g(u,u_1)).\qquad
\end{multline}
\end{statement}

It is clear that the choice of the pair of basic symmetries $\partial_z$, $\partial_x$ or $\partial_x$, $\partial_y$ instead of $\partial_y$, $\partial_z$ will bring to analogous formulas, up to the cyclic permutation of $a,b,c$ and $f,g,h$. The rest sections contain a number of examples that fall under the described scheme. Notice, that in some cases the symmetry condition (\ref{bgg}) can be relaxed.

\begin{figure}
\centerline{
\includegraphics[scale=0.6]{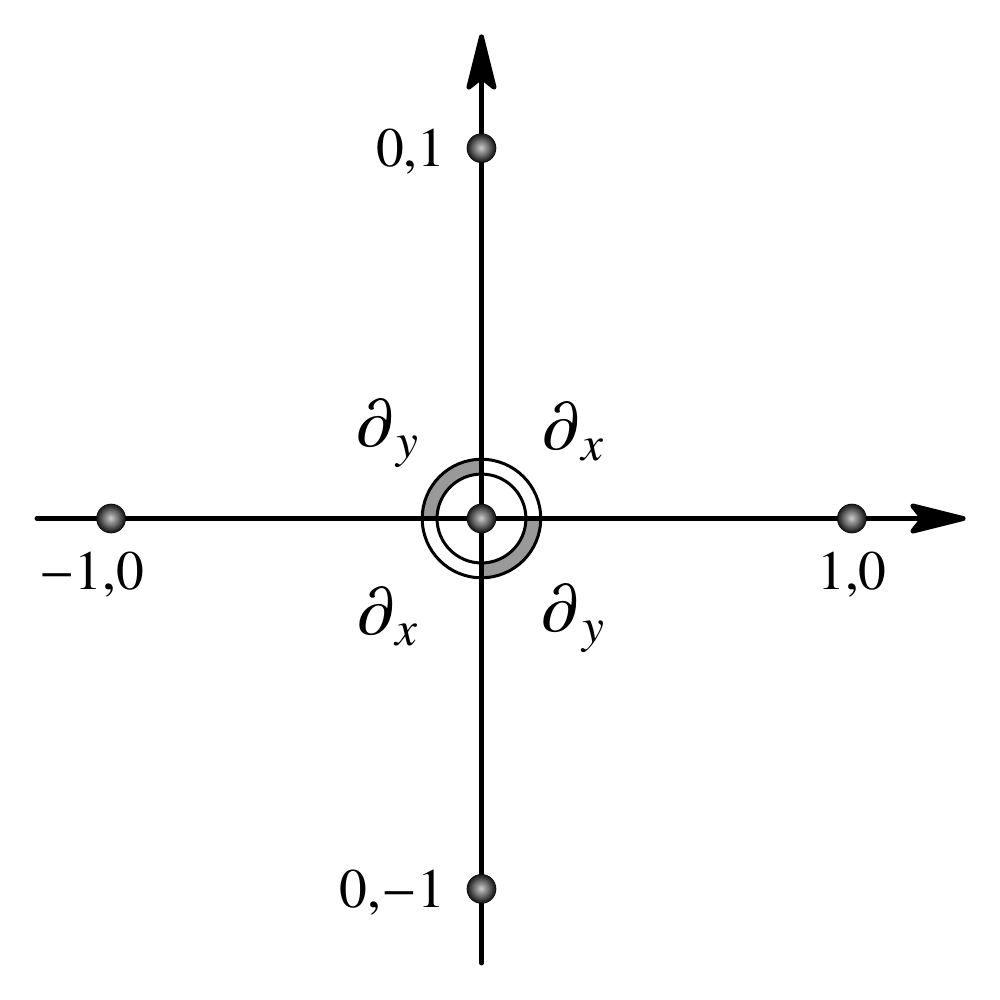}\quad
\raisebox{0mm}{\includegraphics[scale=0.6]{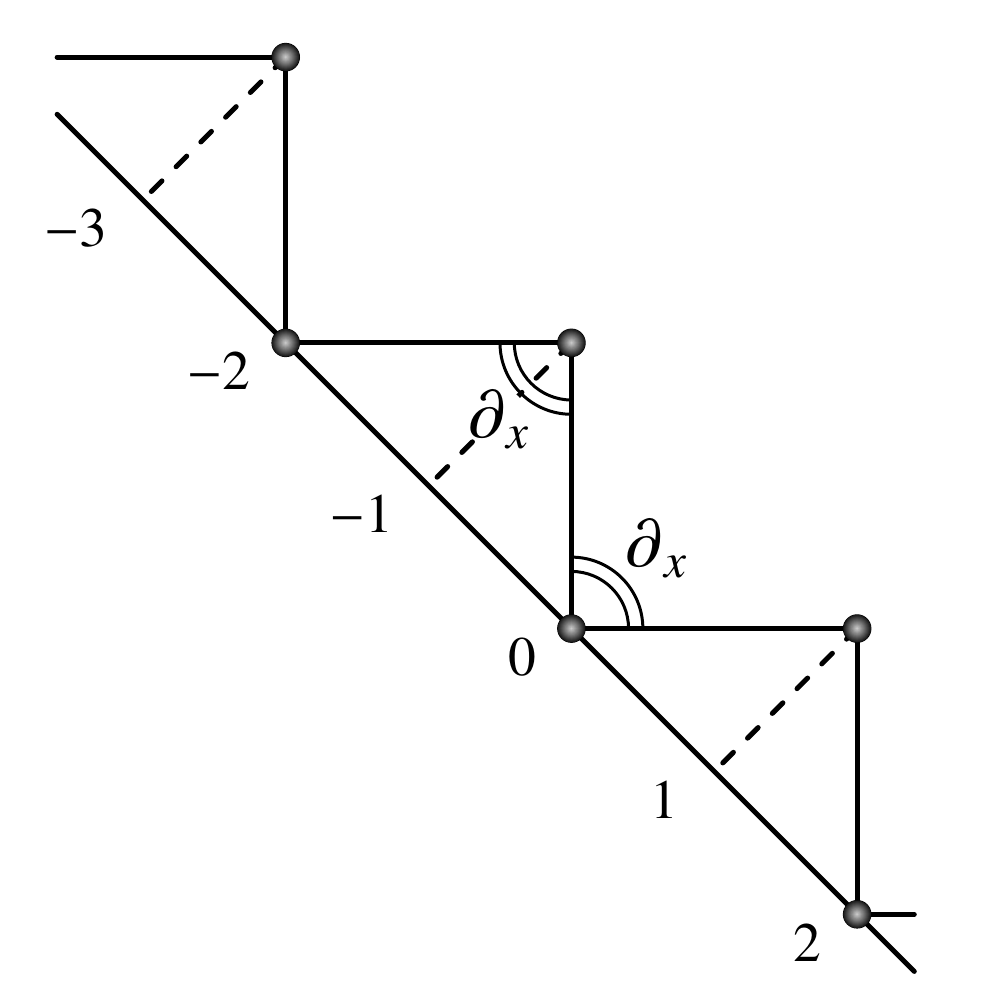}}}
\caption{5-point equation and embedding of the first order lattice equation into the square lattice.}\label{fig:5-point}
\end{figure}

For the sake of completeness, let us briefly consider the case $h=0$, when the basic symmetries $\partial_x$, $\partial_y$ coincide. After change of notation, we arrive, instead of 7-point equation, to 5-point one (see fig.~\ref{fig:5-point})
\begin{equation}\label{5-point}
 f(q,q_{1,0})-\tilde f(q,q_{-1,0})+g(q,q_{0,1})-\tilde g(q,q_{0,-1})=0
\end{equation}
and two basic symmetries
\begin{alignat}{1}
\label{5-Dx}
 q_{,x}&= a(q,f+g)= a(q,\tilde f+\tilde g),\\
\label{5-Dy}
 q_{,y}&= b(q,f-\tilde g)= b(q,\tilde f-g).
\end{alignat}
Each of equations (\ref{5-Dx}), (\ref{5-Dy}) can be written as a two-component lattice equation by choosing variables on a pair of neighboring lines parallel to any of the coordinate axes. Moreover, for equation (\ref{5-Dx}) it is possible to select variables on a pair of diagonals $n_1+n_2=j$, $n_1+n_2=j+1$, as it is shown on right fig.~\ref{fig:5-point} (analogously, for equation (\ref{5-Dy}) the lines $n_1-n_2=\const$ are suitable). The variables
\[
 u(2k)=q(k,j-k),\quad u(2k-1)=q(k,j+1-k),\quad j=\const,
\]
satisfy equations
\[
\begin{aligned}
 u_{,x}&=a(u,f(u,u_1)+g(u,u_{-1})), &\quad& u=u(2k),\\
 u_{,x}&=a(u,\tilde f(u,u_{-1})+\tilde g(u,u_1)), && u=u(2k-1).
\end{aligned}
\]
Obviously, in order for the equations at even and odd sites to coincide, it suffices to satisfy the symmetry conditions
\[
 f(u,v)=\tilde g(u,v),\quad \tilde f(u,v)=g(u,v).
\]

\begin{remark}
The compatibility condition (\ref{DxFF}) is rather restrictive with respect to the functions involved and, in principle, its analysis may result in a finite list of consistent pairs of equations (\ref{7-point}), (\ref{7-Dx}). At the moment, this problem is open, even for the special type of equations (\ref{d7-point}) considered in the next section. The classification of integrable lattice equations from Statement \ref{st:qpx} or, in more general setting, of the form
\[
 q_{,x}=a(q_{-1},q,p,p_1),\quad p_{,x}=b(q_{-1},q,p,p_1),
\]
is not known as well, although some examples are found in the literature. More results are known in the square lattice case, in particular, the problem of classifying 5-point equations (\ref{5-point}) admitting both basic symmetries (\ref{5-Dx}), (\ref{5-Dy}) is completely solved \cite{Adler_Shabat_2006}.
\end{remark}

\begin{remark}
In this paper we consider only autonomous equations (\ref{7-point}), although, in a more general setting, integrable equations of this type may contain variable parameters on the lattice (see eg.~\cite{Adler_Suris_2004}). This restriction is caused not only by reasons of simplicity, but also by the essence of the matter. In fact, one basic symmetry can exist also for an equation with variable parameters, but one can see by more detailed analysis of examples that the complete set of three basic symmetries is possible only in the autonomous case. It is not yet known whether the scheme given in this section admits any generalization to the non-autonomous case.
\end{remark}

\section{Shift-invariant examples}\label{s:difference}

\subsection{Basic symmetries}

Let us consider 7-point equations of the special form
\begin{equation}\label{d7-point}
 (T_1-1)(f)+(T_2-1)(g)+(T_3-1)(h)=0
\end{equation}
(recall that $T_3=T^{-1}_1T^{-1}_2$), where functions $f,g,h$ depend only on the differences of $q$ in the respective lattice direction:
\[
 f=f(q-q_{-1,0}),\quad g=g(q-q_{0,-1}),\quad h=h(q-q_{1,1}).
\]
This is the class of equations, which are invariant with respect to the one-parametric group of translations $q\to q+\varepsilon$ and are Lagrangian with respect to the functionals of the form
\[
 S=\sum_{i,j}T^i_1T^j_2\bigl(
 \varphi(q-q_{-1,0})+\psi(q-q_{0,-1})+\chi(q-q_{1,1})\bigr).
\]
A list of integrable equations (\ref{d7-point}) was obtained under assumption on existence of B\"acklund auto-transformation of special type \cite{Adler_2000a, Adler_2000b}. It turns out that all equations from this list also admit a complete set of basis symmetries (\ref{Dx}), (\ref{Dy}), (\ref{Dz}); moreover, functions $a,b,c$ in these equations do not depend explicitly on $q$. The basic symmetries for the case (\ref{A}) in the table \ref{t:fghabc} were found in \cite{Adler_2000a}; for the remaining cases this result seems to be new. We note that in the formulas for $f,g,h$ there are nonessential differences comparing to the cited papers, which are related to a more convenient choice of variables and, in some cases, rotation of the lattice.

\begin{table}[t]
\begin{alignat}{7}
\nonumber
 &f(x) &~~~~
 &g(x) &~~~~
 &h(x) &~~~~
 &a(x) &~~~~
 &b(x) &~~~~
 &c(x) &~~~~~~\\[3pt]
\hline\nonumber
 &&&&&&&&&&&&\\[-10pt]
\label{A}\tag{A}
  & \frac{\alpha}{x}
 && \frac{\beta}{x}
 && \frac{\gamma}{x}
 && \frac{1}{x}
 && \frac{1}{x}
 && \frac{1}{x} &\\[3pt]
\label{B}\tag{B}
  & \alpha\coth x
 && \beta\coth x
 && \gamma\coth x
 && \frac{1}{x}
 && \frac{1}{x}
 && \frac{1}{x} &\\[3pt]
\label{C}\tag{C}
  & \log\frac{x+\alpha}{x-\alpha}
 && \log\frac{x+\beta}{x-\beta}
 && \log\frac{x+\gamma}{x-\gamma}
 && \frac{1}{e^x-1}
 && \frac{1}{e^x-1}
 && \frac{1}{e^x-1} &\\[3pt]
\label{D}\tag{D}
  & -\log(1+1/x)
 && \log x
 && \log x
 && e^{-x}
 && \frac{1}{e^x+1}
 && \frac{1}{e^x+1} &\\[3pt]
\label{E}\tag{E}
 & \frac{1}{1+e^x}
 && -e^x
 && e^{-x}
 && x
 && \frac{1}{x}
 && \frac{1}{x+1} &\\[3pt]
\label{F}\tag{F}
  & \log(e^{-x}-1)
 && \log(e^x-1)
 && \log(e^x-1)
 && \frac{1}{e^x+1}
 && e^{-x}
 && e^{-x} &\\[3pt]
\label{G}\tag{G}
 & -x
 && -\log(e^{-x}-1)
 && \log(e^x-1)
 && \frac{1}{e^x+1}
 && e^{-x}
 && e^x &\\[3pt]
\label{H}\tag{H}
  & \log(e^{-x}-1)
 && \log\frac{e^x+1}{\gamma e^x+1}
 && \log(e^x+1)
 && \frac{1}{e^x+1}
 && e^{-x}
 && \frac{1}{e^x+\gamma} &\\[3pt]
\label{I}\tag{I}
  & \log\frac{\alpha e^x-1}{e^x-\alpha}
 && \log\frac{\beta e^x-1}{e^x-\beta}
 && \log\frac{\gamma e^x-1}{e^x-\gamma}
 && \frac{1}{e^x-1}
 && \frac{1}{e^x-1}
 && \frac{1}{e^x-1} & \\[5pt]
\hline\nonumber
\end{alignat}
\vspace{-30pt}
\captionsetup{width=0.8\textwidth}
\caption{Functions defining equation (\ref{d7-point}) and its basic symmetries. The parameters are constrained by relation $\alpha+\beta+\gamma=0$ in the cases (\ref{A}), (\ref{B}), (\ref{C}), and by relation $\alpha\beta\gamma=1$ in the case (\ref{I}).}
\label{t:fghabc}
\end{table}

\begin{statement}\label{st:d7-point}
Equation (\ref{d7-point}) is consistent with equations
\begin{alignat*}{3}
 q_{,x}&= a( &&T_1(f)-g-h)= a( &&f-T_2(g)-T_3(h)),\\
 q_{,y}&= b(-&&f+T_2(g)+h)= b(-&&T_1(f)+g-T_3(h)),\\
 q_{,z}&= c(-&&f-g+T_3(h))= c(-&&T_1(f)-T_2(g)+h),
\end{alignat*}
for all function sets $f,g,h,a,b,c$ from the table \ref{t:fghabc}. Moreover, the corresponding vector fields $\partial_x$, $\partial_y$, $\partial_z$ are mutually commutative and define the variational symmetries for the functional $S$.
\end{statement}

The statement \ref{st:d7-point} is verified by direct computation. As for the method that allows us to find the functions $a,b,c$, then it is based on the well-known connection of equations (\ref{7-point}) with quad-equations. Recall that if we consider quad-equations consistent on a cubic lattice, then the variables on the hyperplane $n_1+n_2+n_3=0$ are governed by equation of the form (\ref{7-point}). All three coordinate directions are transversal to this hyperplane, and the basic symmetries arise as a result of a continuous limit along one of these directions. A similar approach for derivation of continuous symmetries of 5-point equations was applied in \cite{Adler_Suris_2004}.

Equations in the table \ref{t:fghabc} correspond to type $H$ quad-equations and their asymmetric generalizations \cite{Adler_Bobenko_Suris_2009, Boll_Suris_2009}. It is rather tedious to consider their embedding in a 3-dimensional lattice, but in fact it is sufficient to know that as a result of the limiting procedure, one of the functions in the three-leg form of the quad-equation becomes a linear fractional function of $q_{,x}$. The coefficients of this function are easily found by the method of undetermined coefficients. In more details, let us notice that equations (\ref{d7-point}), for the above function sets $f,g,h$, can be cast into the rational form, after the change $e^q\to q$ when necessary. After this, we search functions $a,b,c$ in the linear fractional form like $a(x)=\frac{\kappa X+\lambda}{\mu X+\nu}$, where $X=x$ in the additive cases (\ref{A}), (\ref{B}), (\ref{E}) and $X=e^x$ in the multiplicative ones (all other cases, that is, such that functions $f,g,h$ contain logarithms). Moreover, since the symmetries are defined up to the change $q_{,t}\to c_1\partial_t(q)+c_2$ (scaling of $t$ and the classical translational symmetry), hence it is sufficient to consider $a=X$ and $a=1/(X+\nu)$. This trick works in all cases without exception.

\subsection{Lattice equations of second order in the symmetric cases}

Let us now turn to the construction of lattice equations (\ref{ut}). They can be attached to all equations from the above list under certain restrictions on the parameters. We begin with the cases when the conditions of discrete symmetry (\ref{bgg}) are fulfilled. For equations of the form (\ref{d7-point}), these conditions amount to relations
\begin{equation}\label{bg}
 b(x)=c(x),\quad g(x)=h(x).
\end{equation}
One can readily see from table \ref{t:fghabc}, that these equalities can be satisfied in all cases except for (\ref{E}), (\ref{G}). To this end, in the cases (\ref{A}), (\ref{B}), (\ref{C}) the parameters must be constrained by relations $\beta=\gamma$, $\alpha=-2\gamma$, moreover, it is possible to scale the parameter $\gamma$ into any non-vanishing constant. We will not consider the case (\ref{B}), because it is related with (\ref{A}) by point transformation $e^{2q}\to q$ \cite{Adler_2000a}. In the case (\ref{H}), one should set $\gamma=0$, and in the case (\ref{I}) the constraint is $\beta=\gamma$, $\alpha=\gamma^{-2}$, with essential parameter $\gamma$. All equations under consideration can be brought to a rational form by making, if necessary, the change $e^u\to u$. Taking into account also the scaling of $t$, we prove the following statement by direct computation.

\begin{statement}\label{st:dsym-ut}
Equations (\ref{ut}) corresponding to the cases (\ref{A}), (\ref{C}) at $\beta=\gamma$, $\alpha=-2\gamma$; (\ref{D}), (\ref{F}); (\ref{H}) at $\gamma=0$; (\ref{I}) at $\beta=\gamma$, $\alpha=\gamma^{-2}$, are, respectively, point equivalent to the following lattice equations:
\begin{align}
\label{utA}\tag{a}&
 \frac{1}{2}u_{,t}=
   \frac{1}{\frac{2}{u_2-u}+\frac{1}{u-u_1}+\frac{1}{u-u_{-1}}}
  -\frac{1}{\frac{2}{u_{-2}-u}+\frac{1}{u-u_1}+\frac{1}{u-u_{-1}}},\\[5pt]
\label{utC}\tag{c}&
 \frac{1}{4}u_{,t}=
   \frac{1}{
    \frac{(u_2-u+2)(u_1-u-1)(u_{-1}-u-1)}
         {(u_2-u-2)(u_1-u+1)(u_{-1}-u+1)}-1}
  +\frac{1}{
    \frac{(u_{-2}-u-2)(u_1-u+1)(u_{-1}-u+1)}
         {(u_{-2}-u+2)(u_1-u-1)(u_{-1}-u-1)}-1},\\[5pt]
\label{utD}\tag{d}&
 u_{,t}= \frac{u-u_{-1}}{u_1-u_{-1}+\frac{u-u_1}{u-u_2}}
        -\frac{u-u_1}{u_1-u_{-1}+\frac{u-u_{-1}}{u-u_{-2}}},\\[5pt]
\label{utF}\tag{f}&
 u_{,t}= u\frac{\bigl(1-\frac{u}{u_2}\bigr)\bigl(1-\frac{u_{-1}}{u}\bigr)}
               {1-\frac{u}{u_1}}
        +u\frac{\bigl(1-\frac{u_{-2}}{u}\bigr)\bigl(1-\frac{u}{u_1}\bigr)}
               {1-\frac{u_{-1}}{u}},\\[5pt]
\label{utH}\tag{h}&
 u_{,t}= u\frac{\bigl(1-\frac{u}{u_2}\bigr)\bigl(1+\frac{u_{-1}}{u}\bigr)}
               {1+\frac{u}{u_1}}
        +u\frac{\bigl(1-\frac{u_{-2}}{u}\bigr)\bigl(1+\frac{u}{u_1}\bigr)}
               {1+\frac{u_{-1}}{u}},\\[5pt]
\nonumber&
 -\frac{1}{1+\gamma^2}u_{,t}=
   \frac{(u-\gamma u_{-1})(u-\gamma u_1)(u-\gamma^{-2}u_2)}
        {(\gamma u_{-1}-u)(\gamma u_1-u_2)
        -(\gamma u-u_1)(u_{-1}-\gamma u_2)}\\
\label{utI}\tag{i}& \qquad\qquad\qquad
  -\frac{(u-\gamma^{-1}u_{-1})(u-\gamma^{-1}u_1)(u-\gamma^2u_{-2})}
        {(\gamma u_{-2}-u_{-1})(\gamma u-u_1)
        -(\gamma u_{-1}-u)(u_{-2}-\gamma u_1)}.
\end{align}
\end{statement}

Notice that equations (\ref{utF}), (\ref{utH}) coincide under the non-autonomous change $u(n)\to(-1)^nu(n)$.

The obtained equations look rather cumbersome and, as a rule, the bringing of the right-hand side to the common denominator only complicates the formulas. However, we should take into account the possibility of adding to the right-hand side terms of the form $c$ or $cu$ which correspond to the classical symmetries of translation and scaling. By construction, the lattice equation under consideration admits at least one of these symmetries. In all cases, except for (\ref{utF}), (\ref{utH}), it is not difficult to choose an appropriate linear combination which brings the equation to the factorized form (\ref{i.HPut})
\[
 u_{,t}=\frac{H(u_{-1},u)H(u,u_1)(u_2-u_{-2})}
          {P(u_{-2},u_{-1},u,u_1)P(u_{-1},u,u_1,u_2)},
\]
where $H$ is a biquadratic and $P$ is an affine-linear polynomials. This can be considered as a standard form; we will see in section \ref{s:Q4} that it exists also for more general lattice equations. Here we will not rewrite the equations in this form and give only a couple of examples. In the case (a), the right hand side does not require corrections and its factorization yields
\begin{equation}\label{utAA}
 H=(u_1-u)^2,\quad P=\frac{1}{2}(u_2+u)(u_1+u_{-1})-u_{-1}u_1-uu_2;
\end{equation}
in the case (i) the correction term is $(\gamma^2-\gamma^{-2})u$ and
\begin{equation}\label{utII}
 H=(\gamma+\gamma^{-1})(u_1-\gamma u)(\gamma u_1-u),\quad
 P=(\gamma u_2-u_{-1})(u_1-\gamma u)+(u_2-\gamma u_1)(\gamma u_{-1}-u).
\end{equation}

The remarkable fact is that all equations listed above are brought, by difference substitutions, to {\em the same} universal equation
\begin{equation}\label{wt}
 w_{,t}=(w+1)\Bigl(w_2\Bigl(\frac{1}{w_1}+1\Bigr)w
   -w\Bigl(\frac{1}{w_{-1}}+1\Bigr)w_{-2} +w_1-w_{-1}\Bigr).
\end{equation}
These substitutions are given in the following statement. The constant factors in the left-hand sides of the lattice equations were selected in such a way that an additional scaling of $t$ is not needed.

\begin{statement}\label{st:dsym-wt}
The lattice equations listed in Statement \ref{st:dsym-ut} are related with (\ref{wt}) by the following substitutions:
\begin{alignat*}{2}
(\ref{utA}):&\qquad&&
 w=\frac{1}{2\frac{(u_3-u_2)(u_1-u)}{(u_3-u_1)(u_2-u)}-1},\\
(\ref{utC}):&&&
 w=\frac{(u_3-2u_2+u_1)(u_2-2u_1+u)}
        {2-2(u_2-u_1)^2-(u_3-2u_2+u_1)(u_2-2u_1+u)},\\
(\ref{utD}):&&&
 w=-\frac{1}{1+\frac{u_2-u_1}{(u_3-u_1)(u_2-u)}},\\
(\ref{utF}):&&&
 w=\frac{u}{u_1}-1,\\
(\ref{utH}):&&&
 w=-\frac{u}{u_1}-1,\\
(\ref{utI}):&&&
 w=\frac{\gamma(u-(\gamma^{-1}+\gamma)u_1+u_2)
         (u_1-(\gamma^{-1}+\gamma)u_2+u_3)}
    {(\gamma u-u_1)(\gamma u_2-u_3)-(\gamma u_1-u_2)(u-\gamma u_3)}.
\end{alignat*}
\end{statement}

\begin{remark}
All these substitutions are of the form $w=\phi(\rho[u])$, where $\rho[u]$ is equivalent to the density $\log\partial_2(f)$ of the conservation law for the corresponding lattice equation $u_{,t}=f[u]$. Equations (\ref{wt}), (\ref{utA}) and the substitutions between them were considered in paper \cite{Adler_2016}, devoted to integrable lattice equations which are invariant under the group of M\"obius transformations $u\to\frac{\kappa u+\lambda}{\mu u+\nu}$. It is easy to verify that equation (\ref{utA}) and substitution $u\to w$ can be written in terms of the invariants $X$ and $Y/u_{,t}$ of this group,
\[
 X=\frac{(u_1-u)(u_{-1}-u_{-2})}{(u_1-u_{-1})(u-u_{-2})},\quad
 Y=\frac{(u_1-u)(u-u_{-1})}{u_1-u_{-1}},
\]
namely,
\[
 u_{,t}=\frac{4Y(1-X-X_1)}{(2X-1)(2X_1-1)},\quad w=\frac{1}{2X-1}.
\]
It is interesting that there is also another substitution into (\ref{wt}) in this case; it is of the form $w=H/P-1$ with polynomials (\ref{utAA}).
\end{remark}

\subsection{Asymmetric cases}

In the rest cases (\ref{G}), (\ref{E}), we first apply the change $e^q\to q$. In the case (\ref{G}) the basic symmetries $\partial_y$ and $\partial_z$ take the form
\begin{alignat*}{2}
 q_{,y}&=q_{-1,0}\Bigl(\frac{q}{q_{0,1}}-1\Bigr)
                 \Bigl(\frac{q}{q_{1,1}}-1\Bigr)
       &=\frac{1}{q_{1,0}}(q-q_{0,-1})(q-q_{-1,-1}),\\
 q_{,z}&=q_{1,0}\Bigl(\frac{q}{q_{0,1}}-1\Bigr)
                \Bigl(\frac{q}{q_{1,1}}-1\Bigr)
       &=\frac{1}{q_{-1,0}}(q-q_{0,-1})(q-q_{-1,-1}).
\end{alignat*}
One can see that the change (\ref{u-odd-even}) used in the symmetric examples,
\[
 u(2k)=q(k)=q(k,j),\quad u(2k-1)=p(k)=q(k,j+1),\quad j=\const,
\]
brings to equations which are different for odd and even variables $u$, for any linear combination of the flows $\partial_y$ and $\partial_z$. However, this can be easily fixed by applying the change
\[
 u(2k)=q(k)=q(k,j),\quad u(2k-1)=p(k)=-\frac{1}{q(k,j+1)},\quad j=\const
\]
instead. As a result, the basic symmetries turn into the following pair of 2-component lattice equations \cite{Tsuchida_2002}:
\begin{gather*}
 q_{,y}=q_{-1}(qp+1)(qp_1+1),\quad p_{,y}=-p_1(qp+1)(q_{-1}p+1),\\
 q_{,z}=q_1(qp+1)(qp_1+1),\quad p_{,z}=-p_{-1}(qp+1)(q_{-1}p+1),
\end{gather*}
and the difference of the flows $\partial_t=\partial_z-\partial_y$ will be equivalent to the lattice \cite{Garifullin_Yamilov_2012, Garifullin_Mikhailov_Yamilov_2014}
\begin{equation}\label{utG}\tag{g}
 u_{,t}=(u_1u+1)(uu_{-1}+1)(u_2-u_{-2}).
\end{equation}
Like usually, this equation admits a substitution into equation (\ref{wt}), $w=uu_1$ \cite{Adler_2016}. A remarkable feature of equation (\ref{utG}) is that this is the only example with a polynomial right-hand side from all the equations arising in our approach.

In the case (\ref{E}), let us apply a similar change
\[
 u(2k)=q(k)=\frac{1}{q(k,j)},\quad u(2k-1)=p(k)=q(k,j+1),\quad j=\const.
\]
As the result, the flows $\partial_y$, $\partial_z$ take the form
\begin{gather*}
 q_{,y}= \frac{q+q_{-1}}{1+(q+q_{-1})(p_1+p)},\quad
 p_{,y}=-\frac{p_1+p}{1+(q+q_{-1})(p_1+p)},\\
 q_{,z}=-\frac{q_1+q}{1+(q_1+q)(p_1+p)},\quad
 p_{,z}= \frac{p+p_{-1}}{1+(q+q_{-1})(p+p_{-1})},
\end{gather*}
and the flow $\partial_t=-\partial_z-\partial_y$ is equivalent to the lattice equation
\begin{equation}\label{utE}\tag{e}
u_{,t}=\frac{u_2-u_{-2}}{(1+(u_2+u)(u_1+u_{-1}))(1+(u_1+u_{-1})(u+u_{-2}))}.
\end{equation}
Substitution into equation (\ref{wt}) is of the form
\[
 w=\frac{1}{1+(u_3+u_1)(u_2+u)}-1.
\]

\section{Elliptic lattice equation}\label{s:Q4}

In \cite{Adler_Suris_2004} equations (\ref{7-point}) in the multiplicative form
\begin{equation}\label{Q}
 f(q,q_{1,0};\alpha)
 f(q,q_{-1,0};\alpha)
 f(q,q_{0,1};\beta)
 f(q,q_{0,-1};\beta)
 f(q,q_{-1,-1};\gamma)
 f(q,q_{1,1};\gamma)=1,
\end{equation}
were considered, with $\alpha+\beta+\gamma=0$ and factors $f$ of the general form
\[
 f(q,p;\alpha)=\frac{s(q+p+\alpha)s(q-p+\alpha)}
                    {s(q+p-\alpha)s(q-p-\alpha)}.
\]
Equations of this type are related with quad-equation $Q_4$ which is associated with elliptic curve and its trigonometric and rational degenerations $Q_3|_{\delta=1}$ and $Q_2$ (moreover, cases (\ref{A}), (\ref{C}), (\ref{I}) from the previous section can be obtained by further limiting procedures, corresponding to equations $Q_3|_{\delta=0}$ and $Q_1$). Equation (\ref{Q}) is equivalent to the Euler--Lagrange equation for the functional
\[
 S=\sum_{i,j}T^i_1T^j_2\bigl(
 \varphi(q,q_{-1,0};\alpha)+\varphi(q,q_{0,-1};\beta)
 +\varphi(q,q_{1,1};\gamma)\bigr),\quad
 \partial_q\varphi(q,p;\alpha)=\log f(q,p;\alpha),
\]
provided that $\varphi(q,p;\alpha)=\varphi(p,q;\alpha)$, which is true if $s$ is an odd function.

\begin{statement}\label{st:Q}
Equation (\ref{Q}) corresponding to functions
\[
 s(x)=\sigma(x),\quad s(x)=\sinh(x),\quad s(x)=x,
\]
where $\sigma(x)$ is the Weierstrass function, is consistent with equations
\begin{align*}
 \frac{q_{,x}+1}{q_{,x}-1}
  &=f(q,q_{1,0};\alpha)f(q,q_{0,-1};\beta)f(q,q_{1,1};\gamma)\\
  &=f(q,q_{-1,0};-\alpha)f(q,q_{0,1};-\beta)f(q,q_{-1,-1};-\gamma),\\[3pt]
 \frac{q_{,y}+1}{q_{,y}-1}
  &=f(q,q_{-1,0};\alpha)f(q,q_{0,1};\beta)f(q,q_{1,1};\gamma)\\
  &=f(q,q_{1,0};-\alpha)f(q,q_{0,-1};-\beta)f(q,q_{-1,-1};-\gamma),\\[3pt]
 \frac{q_{,z}+1}{q_{,z}-1}
  &=f(q,q_{-1,0};\alpha)f(q,q_{0,-1};\beta)f(q,q_{-1,-1};\gamma)\\
  &=f(q,q_{1,0};-\alpha)f(q,q_{0,1};-\beta)f(q,q_{1,1};-\gamma).
\end{align*}
The basic flows $\partial_x$, $\partial_y$, $\partial_z$ mutually commute and define the variational symmetries for the functional $S$.
\end{statement}

The proof is based, like for the Statement \ref{st:d7-point}, on embedding of (\ref{Q}) into the cubic lattice and continuous limit in the corresponding quad-equations (cf with equation (9.3) in \cite{Adler_Suris_2004} for the 5-point case).

The symmetry condition (\ref{bgg}) for equations under consideration amounts to the equality of the parameters $\beta=\gamma$. The resulting lattice equation is the following.

\begin{statement}\label{st:Q-ut}
Let $\alpha=-2\gamma$, $\beta=\gamma$, then the flow $\partial_t=\frac{1}{2}(\partial_y+\partial_z)$ is governed by the lattice equation
\begin{multline}\label{Qut}
 u_{,t}=\frac{1}{f(u,u_2;2\gamma)f(u,u_1;-\gamma)f(u,u_{-1};-\gamma)-1}\\
  -\frac{1}{f(u,u_1;-\gamma)f(u,u_{-1};-\gamma)f(u,u_{-2};2\gamma)-1}\qquad
\end{multline}
for the variables $u(2k)=q(k,j)$, $u(2k-1)=q(k,j+1)$.
\end{statement}

After replacing of $f$ with its expression through the function $s$, equation (\ref{Qut}) becomes very bulky, but it can be simplified by use of the point transformation of the form
\[
 \phi(u)=\tilde u,\quad \kappa(\gamma)=\tilde\gamma,\quad
 \partial_t=c(\gamma)\partial_{\tilde t}.
\]
Function $\phi$ is chosen in such a way that the underlying quad-equation is brought to the affine-linear form \cite{Adler_Suris_2004}, in this case equation (\ref{Q}), the basic flows $\partial_x,\partial_y,\partial_z$ and the lattice equation (\ref{Qut}) are brought to rational form with numerator and denominator of minimal degree.

\begin{theorem}\label{th:HPut}
Equation (\ref{Qut}) corresponding to functions $s(x)=\sigma(x)$, $s(x)=\sinh(x)$, $s(x)=x$, is point equivalent to an equation of the form
\begin{equation}\label{HPut}
 u_{,t}=\frac{H(u_{-1},u)H(u,u_1)(u_2-u_{-2})}
  {P(u_{-2},u_{-1},u,u_1)P(u_{-1},u,u_1,u_2)},
\end{equation}
where $P$ is an affine-linear polynomial obeying the symmetry of a square with the vertices $u_{-1},u,u_1,u_2$:
\begin{equation}\label{HP1}
 P(u_{-1},u,u_1,u_2)=P(u_2,u_1,u,u_{-1})=P(u_1,u,u_{-1},u_2),
\end{equation}
and $H(u,u_1)=H(u_1,u)$ is a symmetric biquadratic polynomial related with $P$ by equality
\begin{equation}\label{HP2}
 \const H(u,u_1)=\partial_{-1}(P)\partial_2(P)-\partial_{-1}\partial_2(P)P.
\end{equation}
The family of equations (\ref{HPut}) is invariant with respect to the group of the M\"obius transformations $u\to\frac{\kappa u+\lambda}{\mu u+\nu}$. The polynomial $P$ corresponding to the case $s(x)=\sigma(x)$ belongs to the generic orbit of this group.
\end{theorem}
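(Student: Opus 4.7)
The approach is to carry out the point transformation explicitly, check the factorization, and then read off the symmetries and the M\"obius action. I will break the argument into three manageable parts.

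First, I would apply to (\ref{Qut}) the point transformation $\phi$ (and parameter rescaling $\kappa$) that is known from \cite{Adler_Suris_2004} to bring the quad-equation underlying (\ref{Q}) into the affine-linear $Q_4$ / $Q_3|_{\delta=1}$ / $Q_2$ form in the three cases $s(x)=\sigma(x),\sinh(x),x$. In these coordinates each multiplicative factor $f(u,v;\alpha)$ becomes the quotient of two polynomials affine-linear in both $u$ and $v$, derived from the corresponding affine-linear quad-polynomial $Q(u,v;\alpha)$. The two triple products entering the right-hand side of (\ref{Qut}) therefore turn into rational expressions whose denominators are affine-linear in each of $u_{-2},u_{-1},u,u_1$ and in each of $u_{-1},u,u_1,u_2$, respectively.

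Second, I would clear denominators and show that each individual term $1/(A-1)$ on the right of (\ref{Qut}) has the shape $N(u_{-1},u,u_1)/P(u_{-2},u_{-1},u,u_1)$ with $P$ affine-linear in each of its arguments. Subtracting the two such fractions, the combined numerator must vanish identically on $u_2=u_{-2}$ and hence contain the linear factor $u_2-u_{-2}$; once this factor is extracted, the remainder depends only on $u_{-1},u,u_1$ and, by matching biquadratic degrees in $(u_{-1},u)$ and in $(u,u_1)$ separately, must split as $H(u_{-1},u)H(u,u_1)$ up to a constant, producing the claimed form (\ref{HPut}). The symmetries in (\ref{HP1}) are then immediate: the reversal $P(u_{-1},u,u_1,u_2)=P(u_2,u_1,u,u_{-1})$ reflects the lattice $\Integer_2$-inversion $n\mapsto -n$ that swaps the two fractions, while $P(u_{-1},u,u_1,u_2)=P(u_1,u,u_{-1},u_2)$ follows from the manifest $u_{-1}\leftrightarrow u_1$ symmetry inside the product $f(u,u_1;-\gamma)f(u,u_{-1};-\gamma)$ appearing in $A$; together these two involutions generate the dihedral group of the square.

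The intrinsic relation (\ref{HP2}) is of the familiar type linking a biquadratic to an affine-linear polynomial: since $P$ is affine-linear in $u_{-1}$ and $u_2$, the combination $\partial_{-1}(P)\partial_2(P)-P\,\partial_{-1}\partial_2(P)$ is automatically independent of $u_{-1}$ and $u_2$, and a direct identification in each of the three cases matches it with $H(u,u_1)$ up to a constant. M\"obius invariance follows structurally: under $u_i\mapsto(\kappa u_i+\lambda)/(\mu u_i+\nu)$, the affine-linearity of $P$ and the biquadratic structure of $H$ ensure that a uniform product of denominator factors cancels between the numerator and denominator of (\ref{HPut}) once $u_{,t}$ is rescaled, so the family is closed under this action and the transformed $P,H$ still obey (\ref{HP1})--(\ref{HP2}). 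The generic-orbit claim in the elliptic case reduces to a parameter count: the elliptic $P$ carries a genuine modulus beyond the M\"obius freedom, placing it on the top-dimensional orbit in the space of affine-linear polynomials with the square symmetry.

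The main obstacle is the explicit factorization in Step 2: combining three elliptic $f$-factors into one rational expression whose denominator is affine-linear requires the addition formulas for $\sigma$ and their trigonometric and rational degenerations, precisely the identities used in \cite{Adler_Suris_2004} to derive the three-leg form of $Q_4$. Once those are in place the rest is bookkeeping --- extracting the factor $u_2-u_{-2}$, matching polynomial degrees, and verifying the dihedral symmetries --- and can be reduced in each of the three cases to a routine computer-algebra verification.
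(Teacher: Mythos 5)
Your proposal follows essentially the same route as the paper: apply the uniformizing point transformation ($\wp(u)=\tilde u$, $\cosh(2u)=\tilde u$, $u^2=\tilde u$) coming from the $Q_4$/$Q_3|_{\delta=1}$/$Q_2$ correspondence, rationalize the triple products of $f$-factors via the $\sigma$/$\wp$ addition identities, and establish the factorized form (\ref{HPut}) together with (\ref{HP1}), (\ref{HP2}) and the M\"obius invariance by direct computation. One caveat: matching multidegrees does \emph{not} by itself force the residual numerator (after extracting $u_2-u_{-2}$) to split as a product $H(u_{-1},u)H(u,u_1)$ — that factorization is precisely the nontrivial computational content and must be verified explicitly, as you in effect concede by deferring to computer algebra; similarly, the generic-orbit claim rests on the affine-linear-from-biquadratic reconstruction procedure of Adler--Bobenko--Suris rather than on a bare parameter count.
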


\begin{proof}
In the case $s=\sigma(x)$, we apply the change
\[
 \wp(u)=\tilde u,\quad \wp(\gamma)=\tilde\gamma,
 \quad\partial_t=4\wp'(\gamma)^2\wp'(2\gamma)\partial_{\tilde t}.
\]
In order to pass from $\sigma$ to $\wp$ Weierstrass functions, we use the identity
\[
 -\frac{\sigma(a+b)\sigma(a-b)}{\sigma(a)^2\sigma(b)^2}=\wp(a)-\wp(b).
\]
Replace here $a$ with $a+b$ and with $a-b$, then taking the quotient brings to the following identity as a corollary:
\[
 \frac{\sigma(a+2b)\sigma(a-b)^2}{\sigma(a-2b)\sigma(a+b)^2}
  =\frac{\wp(a+b)-\wp(b)}{\wp(a-b)-\wp(b)}.
\]
Function $f(q,p;\alpha)$ is rewritten in the form
\[
 f(q,p;\alpha)=\frac{\sigma(q+\alpha)^2}{\sigma(q-\alpha)^2}\cdot
  \frac{\wp(q+\alpha)-\wp(p)}{\wp(q-\alpha)-\wp(p)}
\]
by use of the first identity; then the second one makes possible to replace the $\sigma$-functions in the common factor of the product of three functions $f$:
\begin{multline*}
 f(u,u_2;2\gamma)f(u,u_1;-\gamma)f(u,u_{-1};-\gamma)\\
 =\frac{(\wp(u+\gamma)-\wp(\gamma))^2}
       {(\wp(u-\gamma)-\wp(\gamma))^2}
  \cdot\frac{\wp(u+2\gamma)-\wp(u_2)}{\wp(u-2\gamma)-\wp(u_2)}
  \cdot\frac{\wp(u-\gamma)-\wp(u_1)}{\wp(u+\gamma)-\wp(u_1)}
  \cdot\frac{\wp(u-\gamma)-\wp(u_{-1})}{\wp(u+\gamma)-\wp(u_{-1})}.
\end{multline*}
This expression is already rational with respect to the variables $\tilde u_2$, $\tilde u_{\pm1}$. Now, we only have to expand the instances of the form $\wp(u+\const)$ by use of the relations
\[
 \wp'(a)^2=4\wp(a)^3-g_2\wp(a)-g_3,\quad
 \wp(a+b)+\wp(a)+\wp(b)=
 \frac{1}{4}\left(\frac{\wp'(a)-\wp'(b)}{\wp(a)-\wp(b)}\right)^2,
\]
and to substitute this into equation (\ref{Qut}), omitting the tilde at new variables $u,t$ and parameter $\gamma$. Rather tedious, but direct computation brings to the lattice equation of the form (\ref{HPut}) with the polynomials
\begin{equation}\label{Qut-ell}
\begin{aligned}
 H&= \left(uu_1+\gamma u+\gamma u_1+\frac{g_2}{4}\right)^2
     -(u+u_1+\gamma)(4\gamma uu_1-g_3),\\
 P&= (12\gamma^2-g_2)\left(u_{-1}uu_1u_2+\frac{g^2_2}{16}\right)\\
  &\quad -(4\gamma^3+g_2\gamma+2g_3)
          (u_{-1}uu_1+u_{-1}uu_2+u_{-1}u_1u_2+uu_1u_2-g_3)\\
  &\quad     +2\left(\gamma^4-g_3\gamma-\frac{g^2_2}{16}\right)(u_{-1}+u_1)(u+u_2)\\
  &\quad -\gamma(4\gamma^3-3g_2\gamma-4g_3)(u_{-1}u_1+uu_2)\\
  &\quad +\gamma\left(g_2\gamma^2+6g_3\gamma+\frac{g^2_2}{4}\right)
          (u_{-1}+u+u_1+u_2),
\end{aligned}
\end{equation}
which satisfy the relations (\ref{HP1}), (\ref{HP2}). This form of polynomials $P,H$ is related only with the special gauge associated with the Weierstrass form of the elliptic curve (a simpler parametrization is provided by the Jacobi form). The invariance of equation (\ref{HPut}) with respect to the M\"obius transformations is easily verified. The general form of polynomials $P$ with the required type of symmetry is
\begin{multline*}
 P= a_4u_{-1}uu_1u_2+a_3(u_{-1}uu_1+u_{-1}uu_2+u_{-1}u_1u_2+uu_1u_2)\\
  +a_2(u_{-1}+u_1)(u+u_2)+b_2(u_{-1}u_1+uu_2)+a_1(u_{-1}+u+u_1+u_2)+a_0.
\end{multline*}
The procedure of reconstruction of affine-linear polynomial by the respective biquadratic one \cite{Adler_Bobenko_Suris_2009} proves that the polynomials (\ref{Qut-ell}) correspond to the generic orbit, indeed.

Analogously, in the trigonometric case $s(x)=\sinh(x)$ one finds
\begin{equation}\label{Qut-trig}
\begin{gathered}
 \cosh(2u)=\tilde u,\quad e^{2\gamma}=\tilde\gamma,\quad
 \partial_t=-\sinh(4\gamma)\partial_{\tilde t},\\
 H=(\gamma+\gamma^{-1})(u_1-\gamma u)(\gamma u_1-u)
    +\frac{1}{4}(1-\gamma^2)^2(1+\gamma^{-2}),\\
 P=(\gamma u_2-u_{-1})(u_1-\gamma u)+(u_2-\gamma u_1)(\gamma u_{-1}-u)
    +\frac{1}{4}(1-\gamma^2)^2(1+\gamma^{-2})
\end{gathered}
\end{equation}
(again, we omit tilde at $u,\gamma$ in the formulas for $H,P$); in the rational case $s(x)=x$ the change is given by equations
\begin{equation}\label{Qut-rat}
\begin{gathered}
 u^2=\tilde u,\quad \partial_t=-16\gamma\partial_{\tilde t},\\
 H=(u_1-u)^2-2\gamma^2(u_1+u)+\gamma^4,\\
 P=\frac{1}{2}(u_2+u)(u_1-u_{-1})-u_2u-u_1u_{-1}-
   \gamma^2(u_{-1}+u+u_1+u_2)+3\gamma^4.
\end{gathered}
\end{equation}
Computations in these case are straightforward.
\end{proof}

\begin{remark}
The polynomial $P$ (\ref{Qut-ell}) differs from the polynomial $Q$ from the respective quad-equation $Q_4$ by the dependence on parameter: $P$ contains only one parameter $\gamma$, associated with the square itself, while $Q$ depends on parameters $\gamma,\gamma'$ associated with pairs of opposite edges of the square. The relation between two polynomials is described by the limiting procedure
\[
 \const P(\gamma)=
 \lim_{\gamma'\to\gamma}\frac{Q(\gamma,-\gamma')}{\gamma-\gamma'}.
\]
The polynomials (\ref{Qut-trig}) and (\ref{Qut-rat}) are related in similar way with quad-equations $Q_3|_{\delta=1}$ and $Q_2$.
\end{remark}

The following property is proven by direct computation.

\begin{statement}
Equations (\ref{HPut}) and (\ref{wt}) are related by substitution $w=H/P-1$ both in the trigonometric (\ref{Qut-trig}) and the rational (\ref{Qut-rat}) cases.
\end{statement}

In the elliptic case, such a substitution does not pass; more generally, in this case I was not able to find any substitution to another second-order lattice equation.

Finally, let us mention the degenerations leading to the lattice equations from section \ref{s:difference}. In the limiting case $\gamma=0$, polynomials (\ref{Qut-rat}) coincide with polynomials (\ref{utAA}) which correspond to the lattice equation (\ref{utA}) (notice, that substitution $\gamma=0$ directly into (\ref{Qut}) makes no sense because $f(q,p;0)\equiv1$; one has to apply scaling to $t$ at first). If $\gamma\ne0$ then one can set $\gamma=1$ wthout loss of generality, by scaling of $u$ and $t$. In the case (\ref{Qut-trig}), the scaling $u\to\delta^{-1}u$ and passing to the limit $\delta\to0$ bring to polynomials (\ref{utII}) corresponding to equation (\ref{utI}).

\phantomsection{}
\addcontentsline{toc}{section}{Conclusion}
\section*{Conclusion}

In this paper, we studied evolution second order lattice equations associated with special continuous symmetries for discrete 7-point equations on a triangular lattice. Except for a few of the most degenerate cases (equations (\ref{utF}), (\ref{utG}), (\ref{utH})), all these equations belong to the family (\ref{HPut}). Except for the maximally non-degenerate case, all lattice equations admit a difference substitution to equation (\ref{wt}). The proposed construction scheme leaves no doubt about the integrability of all obtained examples (a convincing indication is that the shift along the two-dimensional triangular lattice can be interpreted as a B\"acklund transformation). However, such aspects of integrability as higher symmetries, zero curvature representations, soliton solutions, are still unexplored and need additional investigation, possibly individual for each lattice equation. It is of interest to study also more general two-component systems connected with the basic continuous symmetries of 7-point equations.

\phantomsection{}
\addcontentsline{toc}{section}{Acknowledgements}
\subsection*{Acknowledgements}

This work was supported by the RFBR grant \# 16-01-00289a.

\phantomsection{}
\addcontentsline{toc}{section}{Refernces}

\end{document}